\newtheorem{lemma}{Lemma}
\newtheorem{Theorem}{Theorem}
\newtheorem{Remark}{Remark}
\begin{document}

\title{Downlink Goodput Analysis for D2D Underlaying Massive MIMO Networks}

\author{\IEEEauthorblockN {Zezhong Zhang\IEEEauthorrefmark{1}
		Zehua Zhou\IEEEauthorrefmark{1},
		Rui Wang\IEEEauthorrefmark{1}, and
		Yang Li\IEEEauthorrefmark{2}, \thanks{This work is supported by NSFC 61401192 and Shenzhen Science and Technology Innovation Committee JCYJ20160331115457945.} }
	\IEEEauthorblockA{
		\IEEEauthorrefmark{1}Department of Electrical and Electronic Engineering, The Southern University of Science and Technology, China\\
		\IEEEauthorrefmark{2}Department of Information and Communication Engineering, Xi'an Jiaotong University, China\\
		Email:  \{zhangzz, zhouzh\}@mail.sustc.edu.cn, wang.r@sustc.edu.cn, liyang.ei@stu.xjtu.edu.cn}}
\maketitle

\maketitle

\begin{abstract}
The performance of downlink massive multiple-input-multiple-output networks with co-channel device-to-device communications is investigated in this paper. Specifically, we consider a cellular network with sufficient number of antennas at the base station and typical hexagonal cell coverage, where the cell users and device-to-device transmitters are randomly and uniformly distributed. To obtain the analytical expressions of system-level performance, the asymptotic signal-to-interference ratios for both downlink and device-to-device links are first obtained, which depend on the pathloss and small-scale fading of the interference channels. Since these information may not be available at the service base station or device-to-device transmitters, there exists a chance of packet outage. Therefore, we continue to derive the closed-form approximation of the average goodput, which measures the average number of information bits successfully delivered to the receiver. Hence, the system design trade-off between downlink and co-channel device-to-device communications can be investigated analytically. Moreover, the performance region in which the co-channel device-to-device communications could lead to better overall spectral efficiency can be obtained. Finally, it is shown by simulations that the analytical results matches the actual performance very well.

\end{abstract}

\section{Introduction}

Massive multiple-input multiple-output (MIMO) is an efficient technique to boost the spectral efficiency. However, as elaborated in the existing literature, the issue of pilot contamination \cite{PilotContamination}, which refers to the undiminished inter-cell interference caused by pilot reuse, may severely degrade its performance. Although there have been significant research efforts devoted to address the pilot contamination issue \cite{InterferenceTDD,2016ICCS} and some techniques proposed to mitigate the pilot contamination~\cite{2014Globecom,2015Globecom}, most of the works handle the inter-cell interference from the conventional cellular network point of view, instead of novel network topologies.

Recently, the deployment of wireless cache nodes in cellular networks has drawn significant research attentions \cite{WirelessNode1, WirelessNode2}, where the device-to-device (D2D) links has to be introduced into cellular network. The co-channel deployment of D2D communications and cellular networks was studied in some literature \cite{D2DCellular1, D2DCellular2}, where it is shown that the D2D transmission reusing the cellular spectrum may cause severe interference. To alleviate the interference, D2D underlay massive MIMO cellular networks has been proposed by exploiting spatial degrees of freedom at the base station (BS). For example, in \cite{NovelMMSE}, the authors proposed a pilot reuse strategy for D2D receivers and a novel interference-aided minimum mean square error (MMSE) detector to suppress the D2D-to-cellular interference. In \cite{PilotAllocation}, a novel revised graph coloring-based pilot allocation (RGCPA) algorithm was proposed for pilot allocation, and an iterative scheme was adopted to minimize the transmission power of D2D links. In order to evaluate the overall effect of D2D links on cellular network, a system-level performance analysis is necessary. The interplay between massive MIMO uplink transmission and co-channel D2D transmission has been studied in \cite{Interplay}. However, the conclusion of its analysis cannot be directly applied in downlink. Moreover, the authors assume that the antenna number at the BS is infinity, which is not practical\footnote{In massive MIMO systems, the typical number of antennas at the BS is up to a few hundred, which is large but not infinity.}. As a result, performance analysis of massive MIMO downlink transmission with co-channel D2D links is still open.

In this paper, we would like to shed some light on the above issue by analysing the performance trade-off between massive MIMO downlink and co-channel D2D transmissions. Specifically, we consider a massive MIMO network with typical hexagonal cell structure and random distribution users and D2D links. The D2D links may refer to the direct transmission from wireless cache nodes to cell users, which off-load the traffic from service BSs. In the analysis, we first obtain the asymptotic expressions of downlink and D2D signal-to-interference ratios (SIRs) for sufficiently large (but finite) number of antennas at the BS. These expressions depend on the pathloss and small-scale fading of interference channels, which may be unknown to the service BS or D2D transmitters. For example, the interference to downlink users comes from neighbouring BSs and nearby D2D transmitters, and the channel condition of those interference is not easy to obtain at the service BS. When the randomness of user locations and small-scale fading are considered, it is possible that the transmission data rate is greater than the channel capacity, leading to the {\em packet outage}. Hence We use the average goodput \cite{Goodput}, which measures the average number of bits successfully delivered to the receiver, as performance metric. We derive the approximated expressions of the downlink and D2D goodput, based on which the performance trade-off between massive MIMO downlink and D2D can be evaluated. It is shown by numerical simulations that the analytical results matches the actual performance very well.


\section{System Model}
\subsection{Network Model}

We consider a typical cellular network with $C$ hexagonal cells where the radius of each cell is $R$, as illustrated in Figure. \ref{Network}. Each cell consists of a base station equipped with $M$ antennas, $K$ active single-antenna downlink users and $D$ active D2D transmitters. BSs and D2D transmitters transmit with constant powers $P_{b}$ and $P_{d}$, respectively. The D2D links may refer to the data delivery between the downlink users and wireless cache nodes. For example, the desired data of downlink users is found in one cache node nearby, and then a direct D2D communication link is established. Since the focus of this paper is  on the physical-layer SIR and throughput analysis, the establishment of D2D links is outside the scope of this work. Since massive MIMO technology is considered, $ M $ is sufficiently large, e.g., a few hundred. The downlink users and D2D transmitters are uniformly and independently distributed. Without loss of generality, we investigate the performance of the first cell while other cells are all interfering cells. The $ j $-th downlink user of the $ i $-th cell is referred to as the $ (i,j) $-th downlink user.

The massive MIMO network is working in time-division duplex (TDD) mode. Thus it is assumed that the downlink channel of downlink users is estimated from their uplink pilot transmission within the same coherent fading block. Moreover, in order to improve the overall spectrum efficiency, we consider the co-channel deployment of D2D and downlink transmission, i.e., the D2D transmitters use the same spectrum as the cellular network. Note that the coexistence issue of massive MIMO uplink and D2D communications has been investigated in \cite{Interplay}, the focus of this paper is put on sharing the downlink transmission opportunities with D2D links. All D2D transmitters and receivers are equipped with single antenna. The $ k $-th D2D link (transmitter or receiver) of the $ i $-th cell is referred to as the $ (i,k) $-th D2D link (transmitter or receiver). The notations of downlink and D2D transmissions are summarized below.
\begin{itemize}
	\item ${\bf{h}}_{l,k}^i, {\bf{v}}_{l,j}^i \in {\mathcal{C}^{1 \times M}}$ represent the downlink channel vectors from the $i$-th BS to the $k$-th downlink user and $j$-th D2D receiver in the $l$-th cell, respectively. Each component of ${\bf{h}}_{l,k}^i$ and ${\bf{v}}_{l,j}^i$ is complex Gaussian with mean zero and variance $\rho_{l,k}^i$ and $\rho_{l,j}^i$ respectively.  $\rho _{l,k}^i = {\left( {s_{l,k}^i} \right)^{ - \sigma }}$ and $\rho _{l,j}^i = {\left( {s_{l,j}^i} \right)^{ - \sigma }}$ are the pathloss from the $i$-th BS to the $(l,k)$-th downlink user and $(l,j)$-th D2D receiver, where ${s_{l,k}^i}$ and ${s_{l,j}^i}$ are the distances from the $i$-th BS to the $(l,k)$-th downlink user and $(l,j)$-th D2D receiver. $\sigma$ is the pathloss exponent between BSs and users. ${{\bf{ H}}_i^l\in {\mathcal{C}^{K \times M}}}$ and ${{\bf{V}}_i^l\in {\mathcal{C}^{K \times M}}}$ are the aggregation of ${\bf{h}}_{l,j}^i$ and ${\bf{v}}_{l,j}^i$ within one cell.
	\item ${{g}}_{l,j}^{i,m}, {{u}}_{l,k}^{i,m}$ represents the downlink channel vector from the $(i,m)$-th D2D transmitters to the $j$-th D2D receiver and $k$-th downlink user in the $l$-th cell, respectively. ${{g}}_{l,j}^{i,m}$ and ${{u}}_{l,k}^{i,m}$ are complex Gaussian with mean zero and variance $\rho_{l,j}^{i,m}$ and $\rho_{l,k}^{i,m}$, respectively.  $\rho _{l,j}^{i,m} = {\left( {d_{l,j}^{i,m}} \right)^{ - \kappa }} $ and $\rho _{l,k}^{i,m} = {\left( {d_{l,k}^{i,m}} \right)^{ - \kappa }} $ are the pathloss from the $(i,m)$-th D2D transmitter to the $(l,j)$-th D2D receiver and $(l,k)$-th downlink user, where ${d_{l,j}^{i,m}}$ and ${d_{l,k}^{i,m}}$ are the distances from the $(i,m)$-th D2D transmitter to the $(l,j)$-th D2D receiver and $(l,k)$-th downlink user. $\kappa$ is the pathloss exponent between users. ${{\bf{G}}_i^l\in {\mathcal{C}^{K \times D}}}$ and ${{\bf{U}}_i^l\in {\mathcal{C}^{K \times D}}}$ are aggregation of ${{g}}_{l,j}^{i,m}$ and ${{u}}_{l,j}^{i,m}$.
	\item  ${{\bf{x}}_{l,k}^p}$ is the pilot sequence of the $(l,k)$-th downlink user. We have $\left| {{\bf{x}}_{l,k}^p{{({\bf{x}}_{l,m}^p)}^H}} \right| = 0,\forall k \ne m$ and $\left| {{\bf{x}}_{l,k}^p{{({\bf{x}}_{i,j}^p)}^H}/{L_p}} \right| = \frac{{{P_{u}}}}{{\sqrt {{L_p}} }},\forall i \ne l,$ where $P_{u}$ is the transmission power of each mobile user and $L_p$ represents the pilot length in the uplink.  ${{\bf{X}}_l^p}$ is the aggregation of pilot sequences from active downlink users in the $l$-th cell. .
\end{itemize}

\begin{figure}
	\centering
	\includegraphics[height=4.5cm,width=5cm]{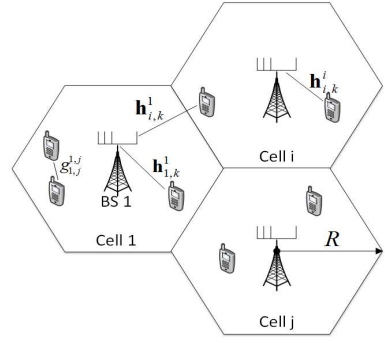}
	\caption[simulation]{Illustration of hexagonal cellular network with radius R, where Cell 1 is the target cell.} \label{Network}
\end{figure}

\begin{Remark}The coexistence SIR analysis of massive MIMO uplink transmission and D2D transmission in \cite{Interplay} cannot be applied in downlink scenario, as the source of interference is completely different. In this paper, we also propose a new analytical framework to evaluate the asymptotic goodput performance with sufficiently large (but finite) number of antennas at BS $M$. Note that the approach introduced in \cite{Interplay} is for infinite M, which may not be accurate when M is only a few hundred. Moreover, we use Gaussian approximation to obtain a simple closed-form expression of the cumulative distribution function (CDF) of SIR, based on which we also derive the average goodput as the performance metric so that the potential packet outage can be counted. These results cannot be obtained with the approach in \cite{Interplay}.
\end{Remark}

\subsection{Channel Model}

Since D2D links share the downlink transmission opportunities, they are silent in the uplink subframe. Thus in the channel estimation phase (as illustrated in Fig. \ref{frame}), the received signal of the $i$-th BS is given by
\begin{equation}
{\bf{Y}}_i^p = {{{\left( {{\bf{H}}_i^i} \right)}^H}{\bf{X}}_i^p} + \sum\limits_{\forall l \neq i} {{{\left( {{\bf{H}}_l^i} \right)}^H}{\bf{X}}_l^p}. \nonumber
\end{equation}

\begin{figure}
	\centering
	\includegraphics[height=2.5cm,width=6cm]{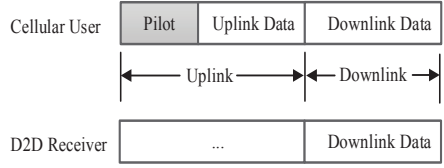}
	\caption[frame]{Illustration of channel model for both downlink and D2D transmission.} \label{frame}
\end{figure}

 With match filter, the estimated uplink channel can be written as
\begin{eqnarray}
{\left( {{\bf{\hat H}}_i^i} \right)^H} &=& {\bf{Y}}_i^p{\left( {{\bf{X}}_i^p} \right)^H}/{L_p}{P_u}\nonumber\\
&=& {\left( {{\bf{H}}_i^i} \right)^H} + \underbrace {\sum\limits_{\forall l \ne i} {{{\left( {{\bf{H}}_l^i} \right)}^H}/\sqrt {{L_p}} } }_{{{\left(\Delta {{\bf{H}}_i^i} \right)}^H}},
\end{eqnarray}
where ${{\bf{\Delta H}}_i^i}$ is the channel estimate error of the $i$-th BS. The first term is the desired CSI and the second term is the inter-cell interference due to pilot reuse (pilot contamination).

In the downlink transmission, the aggregated received signal of the users in the $i$-th cell (as illustrated in Fig. \ref{Network}) is
\begin{eqnarray}
		{{\bf{Y}}_i}\!\!\! &=&\!\! {\bf{H}}_{i}^i{\bf{X}}_i^d + \sum\limits_{l \ne i} {{\bf{H}}_{_{i}}^l {\bf{X}}_l^d}+\sum\limits_{\forall l} {{\bf{U}}_i^l{\bf{S}}_l},
\end{eqnarray}
where ${\bf{U}}_i^l{\bf{S}}_l$ is due to the interference from D2D transmitters. ${{\bf{x}}_{l,k}^d}$ and ${{\bf{s}}_{l,k}^d}$ are the downlink signals for the $k$-th downlink user and the $k$-th D2D receiver in the $l$-th cell. ${{\bf{X}}_l^d}$ and ${{\bf{S}}_l}$ are the aggregations of downlink data blocks for downlink and D2D receivers in the $l$-th cell with average power $P_b$ and $P_d$.

Moreover, the aggregated received signal of the D2D receivers in the $i$-th cell during the downlink subframe can be written as
\begin{eqnarray}
{{\bf{Y}}_i^{D2D}} = \sum\limits_{\forall l}{{\bf{G}}_i^l{\bf{S}}_l + {\bf{V}}_i^l{\bf{X}}_l^d}.
\end{eqnarray}
We neglect the effect of noise because the noise is much smaller than the interference.

Since the channel is estimated at the BS before downlink transmission, the zero-forcing precoder based on imperfect channel state information (CSI) is designed as $${{\bf{P}}_i} = {\left( {{\bf{\hat H}}_i^i} \right)^H}{\left( {{\bf{\hat H}}_i^i{{\left( {{\bf{\hat H}}_i^i} \right)}^H}} \right)^{ - 1}}$$ Denoting the aggregated downlink data block before precoding as ${\bf{W}}_i^d$, the downlink signal for downlink users in the $i$-th cell is
\begin{eqnarray}
{{\bf{Y}}_i}\!\!\!\! &=&\!\!\! {\bf{H}}_{i}^i{{\bf{P}}_i}{\bf{W}}_i^d + \sum\limits_{l \ne i} {{\bf{H}}_{_{i}}^l{{\bf{P}}_l}{\bf{W}}_l^d}+\sum\limits_{\forall l}  {{\bf{U}}_i^l{\bf{S}}_l} \nonumber \\
&=&\!\!\!\!\!\! \sum\limits_{\forall l}\! {{\bf{H}}_i^l{{\left( {{\bf{\hat H}}_l^l} \right)}^H}\!\!{{\left(\! {{\bf{\hat H}}_l^l{{\left( {{\bf{\hat H}}_l^l} \right)\!}^H}} \right)\!}^{ - 1}}\!\!\!\!{\bf{W}}_l^d} \! +\! \sum\limits_{\forall l}\! {{\bf{U}}_i^l{\bf{S}}_l}  \label{MatrixCellular}
\end{eqnarray}
and the SIR of the $(i,k)$-th downlink user can be given by
\begin{eqnarray}
{\gamma _{i,k}} = \frac{1}{{{{\left\|\Delta  {{\bf{h}}_{i,k}^i{{\bf{P}}_i}} \right\|}^2}\! +\! \sum\limits_{l \ne i} {{{\left\| {{\bf{h}}_{i,k}^l{{\bf{P}}_l}} \right\|}^2}} \! +\! \sum\limits_{\forall l,m} {{{\left\| {u_{i,k}^{l,m}} \right\|}^2}}\frac{{{P_d}}}{{{P_b}}} }},\label{CellularExpression}
\end{eqnarray}
where $\Delta  {\bf{h}}_{i,k}^i$ is the channel estimate error of ${\bf{h}}_{i,k}^i$.
Similarly, the downlink signal for D2D receivers in the $i$-th cell can be rewritten as
\begin{eqnarray}
{{\bf{Y}}_i^{D2D}} = \sum\limits_{\forall l} { {\bf{G}}_i^l{\bf{S}}_l^d +  {\bf{V}}_i^l{{\bf{P}}_l}{\bf{W}}_l^d}.  \label{MatrixD2D}
\end{eqnarray}
It is assumed that the D2D transmitter has no knowledge of CSI, but the receiver has perfect CSI. Hence the SIR of the $(i,k)$-th D2D receiver can be given by
\begin{eqnarray}
{\gamma _{i,k}^{D2D}} = \frac{{{{\left\| {g_{i,k}^{i,k}} \right\|}^2}}}{{\sum\limits_l {{{\left\| {{\bf{v}}_{i,k}^l{{\bf{P}}_l}} \right\|}^2}}\frac{{{P_b}}}{{{P_d}}}  + \sum\limits_{(l,m) \ne (i,k)} {{{\left\| {g_{i,k}^{l,m}} \right\|}^2}} }}.\label{D2DExpression}
\end{eqnarray}
Note that the D2D transmitter does not know $\gamma_{i,k}^{D2D}$, it has to determine the data rate according to the statistics of $\gamma_{i,k}^{D2D}$. Thus goodput is introduced in the following section as the performance metric.

\subsection{Average Goodput}
It can be observed from (\ref{CellularExpression}) that the downlink SIR is determined by channel estimation error $\Delta {\bf{h}}_{i,k}^i$ and CSI of interference channel ${\bf{h}}_{i,k}^l$, which may be unknown to the service BS. Hence the link capacity becomes random to the service BS, and it is possible that the scheduled downlink data rate may be larger than the channel capacity, which leads to the packet outage. In order to take the potential packet loss into consideration, we use the average goodput as the performance metric \cite{Goodput}. Given a scheduled downlink data rate $r_{i,k}$ for the $(i,k)$-th downlink user, the goodput is defined as
\begin{equation}
{\emph{g}_{i,k}} = {r_{i,k}}I\left( {{r_{i,k}} \le {{\log }_2}(1 + {\gamma _{i,k}})} \right),
\end{equation}
where $I\left( . \right)$ is is an indicator function with value $1$ if the event is true and $0$ otherwise. The the average goodput spanning all
possible channel realization is given by
\begin{equation}
{{\bar {\emph{g}}}_{i,k}} =
\mathbb{E}_{\bf{H}}\left[ {{\emph{g}_{i,k}}} \right] = {r_{i,k}}\Pr \left( {{r_{i,k}} \le {{\log }_2}(1 + {\gamma _{i,k}})} \right).
\end{equation}
In this paper, we consider the downlink transmission with a target
outage probability $\varepsilon $, thus
\begin{equation}
\Pr \left( {{r_{i,k}} \le {{\log }_2}(1 + {\gamma _{i,k}})} \right) = 1 - \varepsilon. \nonumber
\end{equation}
Define the SIR Threshold determined by $\varepsilon $ as $T_{i,k}$, where
\begin{equation}
\Pr \left( {{\gamma _{i,k}} \le {T_{i,k}}} \right) = \varepsilon .
\end{equation}
 The average goodput of the $(i,k)$-th downlink user becomes
\begin{equation}
{{\bar {\emph{g}}}_{i,k}} = (1 - \varepsilon ) \times {L} \times {\log _2}\left( {1 + {T_{i,k}}} \right),
\end{equation}
where $L$ is the number of total downlink symbols within one subband of a frame.

Similarly, for the $(i,j)$-th D2D receiver, given the outage probability $\varepsilon$, the average goodput becomes
\begin{equation}
{{\bar {\emph{g}}}_{i,k}^{D2D}} = (1 - \varepsilon ) \times {L} \times {\log _2}\left( {1 + {T_{i,k}^{D2D}}} \right),
\end{equation}
where ${T_{i,k}^{D2D}}$ satisfies
\begin{equation}
\Pr \left( {{\gamma _{i,k}^{D2D}} \le {T_{i,k}^{D2D}}} \right) = \varepsilon.
\end{equation}

\section{Goodput Performance analysis}
In this section, we first derive the asymptotic SIR expression for both downlink users and D2D receivers, and then provide the approximated expressions of goodput. First of all, we have the following lemma on asymptotic SIR.
\begin{lemma}[Aysmptotic Downlink SIR]
When the number of BS antennas is sufficiently large, the asymptotic SIR of the $(1,k)$-th downlink user is given by
\begin{eqnarray}
\!\!\!\!\!{\gamma _{1,k}} \!= \!\frac{{1}}{{\!\!\sum\limits_{\forall i \ne 1,j} \!{\left[\!\! {{{{\left( {\frac{{{\bf{h}}_{1,k}^i({\bf{h}}_{i,j}^i)^H}}{{M\rho _{i,j}^i}}} \!\right)\!}^2}}\!\!\! +\!\! \frac{1}{{{L_p}}}{{\left( {\frac{{\rho _{1,k}^i}}{{\rho _{i,j}^i}}} \right)\!}^2}}\! \right]\!\!}  +\!\! \sum\limits_{\forall i,j} {u_{1,k}^{i,j}{{(u_{1,k}^{i,j})}^H}\!}\frac{{{P_d}}}{{{P_b}}} }} \label{CellularSIR}
\end{eqnarray}
and the asymptotic SIR of the $(1,k)$-th D2D receiver is
\begin{eqnarray}
{\gamma _{1,k}^{D2D}} = \frac{{g_{1,k}^{1,k}{{(g_{1,k}^{1,k})}^H}}}{{\sum\limits_{\forall i,j} {{{\left( {\frac{{{\bf{v}}_{1,k}^i({\bf{h}}_{i,j}^i)^H}}{{M\rho _{i,j}^i}}} \right)}^2}}\frac{{{P_b}}}{{{P_d}}}  +\!\! \sum\limits_{\forall (i,j) \ne (1,k)}\!\! {g_{1,k}^{i,j}{{(g_{1,k}^{i,j})}^H}} }}.  \label{D2DSIR}
\end{eqnarray}
\end{lemma}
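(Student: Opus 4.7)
The plan is to exploit the asymptotic orthogonality of independent complex Gaussian channel vectors of length $M$: whenever $\mathbf{a},\mathbf{b}\in\mathbb{C}^{1\times M}$ are independent zero-mean Gaussian with per-entry variances $\alpha,\beta$, the law of large numbers gives $\tfrac{1}{M}\mathbf{a}\mathbf{a}^H\to\alpha$ while $\tfrac{1}{M}\mathbf{a}\mathbf{b}^H=O(1/\sqrt{M})$. Applied to $\hat{\mathbf{H}}_l^l(\hat{\mathbf{H}}_l^l)^H/M$, the off-diagonal entries vanish and the diagonal entries concentrate at $\rho_{l,k}^l$ to leading order (dropping the $1/L_p$ pilot-contamination variance), so the zero-forcing precoder collapses to $\mathbf{P}_l\approx\tfrac{1}{M}(\hat{\mathbf{H}}_l^l)^H\mathbf{D}_l^{-1}$ with $\mathbf{D}_l=\mathrm{diag}(\rho_{l,1}^l,\ldots,\rho_{l,K}^l)$, whose $j$-th column is simply $(\hat{\mathbf{h}}_{l,j}^l)^H/(M\rho_{l,j}^l)$. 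This closed form for the precoder is the workhorse of the rest of the argument.

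For every inter-cell interferer $i\neq 1$ I would then write $(\mathbf{h}_{1,k}^i\mathbf{P}_i)_j=\mathbf{h}_{1,k}^i(\hat{\mathbf{h}}_{i,j}^i)^H/(M\rho_{i,j}^i)$ and split $\hat{\mathbf{h}}_{i,j}^i=\mathbf{h}_{i,j}^i+\Delta\mathbf{h}_{i,j}^i$. The pilot cross-correlation $|\mathbf{x}_{1,k}^p(\mathbf{x}_{i,j}^p)^H|/L_p=P_u/\sqrt{L_p}$ given in the system model injects, for every $j$, a coherent copy of $\mathbf{h}_{1,k}^i$ into $\Delta\mathbf{h}_{i,j}^i$ with coefficient of magnitude $1/\sqrt{L_p}$; consequently $\mathbf{h}_{1,k}^i(\Delta\mathbf{h}_{i,j}^i)^H$ contains the deterministic contribution $\|\mathbf{h}_{1,k}^i\|^2/\sqrt{L_p}\to M\rho_{1,k}^i/\sqrt{L_p}$ up to a unit-modulus phase, while every other summand in $\mathbf{h}_{1,k}^i(\hat{\mathbf{h}}_{i,j}^i)^H$ is an inner product of two independent zero-mean Gaussians and is therefore only of order $\sqrt{M}$. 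Taking squared magnitude and normalising by $M^2$, the cross products between the $O(M)$ deterministic piece and the $O(\sqrt{M})$ random pieces have zero mean and vanish at rate $1/\sqrt{M}$, leaving exactly the bracketed expression in \eqref{CellularSIR}: $(\mathbf{h}_{1,k}^i(\mathbf{h}_{i,j}^i)^H/(M\rho_{i,j}^i))^2+(\rho_{1,k}^i)^2/(L_p(\rho_{i,j}^i)^2)$ for every $(i,j)$ with $i\neq 1$. The D2D-to-cellular term $\sum_{i,j}|u_{1,k}^{i,j}|^2 P_d/P_b$ needs no simplification because D2D transmitters use flat power, and the intra-cell leakage $\|\Delta\mathbf{h}_{1,k}^1\mathbf{P}_1\|^2$ from \eqref{CellularExpression} collapses to $o(1)$ by the ZF identity $\hat{\mathbf{h}}_{1,k}^1\mathbf{P}_1=\mathbf{e}_k^T$ paired with $(\mathbf{h}_{1,k}^1\mathbf{P}_1)_k\to 1$, which is why it is dropped from \eqref{CellularSIR}.

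For the D2D receiver SIR, the same precoder simplification yields $(\mathbf{v}_{1,k}^i\mathbf{P}_i)_j=\mathbf{v}_{1,k}^i(\hat{\mathbf{h}}_{i,j}^i)^H/(M\rho_{i,j}^i)$; the crucial difference is that D2D receivers never transmit uplink pilots, so $\mathbf{v}_{1,k}^i$ is independent of both $\mathbf{h}_{i,j}^i$ and $\Delta\mathbf{h}_{i,j}^i$ and no coherent pilot-contamination term appears, leaving only the random piece $(\mathbf{v}_{1,k}^i(\mathbf{h}_{i,j}^i)^H/(M\rho_{i,j}^i))^2$ in \eqref{D2DSIR}. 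The direct D2D signal $|g_{1,k}^{1,k}|^2$ and the D2D-to-D2D interference $\sum_{(i,j)\neq(1,k)}|g_{1,k}^{i,j}|^2$ are already in their asymptotic form since D2D transmitters apply no beamforming. The hard part of the bookkeeping will be the consistent use of the normalisation $\mathbf{D}_l=\mathrm{diag}(\rho_{l,k}^l)$ rather than the sharper $\mathrm{diag}(\rho_{l,k}^l+\tfrac{1}{L_p}\sum_{l'\neq l}\rho_{l',k}^l)$: this choice is precisely what produces the clean ``$1$'' in the numerator of \eqref{CellularSIR}-\eqref{D2DSIR}, the weights $(\rho_{i,j}^i)^{-2}$, and the cancellation of the self-interference term, and I would have to verify that every neglected $O(1/\sqrt{M})$ cross term and every higher-order pilot-contamination correction absorbed by this normalisation stays of smaller order than the retained $O(1/M)$ and $O(1/L_p)$ contributions.
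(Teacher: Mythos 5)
Your proposal follows essentially the same route as the paper's Appendix~A: invoke asymptotic orthogonality of the length-$M$ Gaussian channel vectors so that the ZF Gram matrix $\hat{\bf H}_l^l(\hat{\bf H}_l^l)^H$ becomes $M\,\mathrm{diag}(\rho_{l,j}^l)$, split the effective interference channel through the precoder into the coherent pilot-contamination piece of magnitude $\rho_{1,k}^i/(\sqrt{L_p}\,\rho_{i,j}^i)$ plus the incoherent $O(1/\sqrt{M})$ piece ${\bf h}_{1,k}^i({\bf h}_{i,j}^i)^H/(M\rho_{i,j}^i)$, drop the vanishing cross terms and the intra-cell leakage, and note that the D2D receivers send no pilots so only the incoherent piece survives in \eqref{D2DSIR}. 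The argument and the resulting expressions match the paper's proof, with your bookkeeping of the precoder normalization being, if anything, slightly more explicit.
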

\begin{proof}
	Please refer to the Appendix A.
\end{proof}

From equations (\ref{CellularSIR}) and (\ref{D2DSIR}), we can observe that the downlink performance is related to downlink users' and D2D transmitters' locations as well as the channel variation. Some of them are unknown to the service BS or D2D transmitters. In this case, the service BS and D2D transmitter cannot ensure that the data rate is below the channel capacity. However, if we know the distribution of the SIR, the outage probability can be controlled by setting the data rate appropriately. Therefore, we derive the CDF of downlink SIR for both downlink users and D2D receivers in the following.

\begin{lemma}[CDF of SIRs]
	Given that the number of BS antennas M and the number of interfering downlink users $(C-1)K$ and D2D transmitters $CD$ are sufficiently large, the CDF of the $(1,k)$-th downlink user's downlink SIR can be approximated as
	\begin{equation}
\Pr \left[ {{\gamma _{1,k}} \le {\tau }} \right] \approx Q\left[ {\frac{{\frac{{1}}{{{\tau }}} - \sum\limits_{i \ne 1} {{\mu _{i,B}}{K}}  - \sum\limits_{\forall i} {{\mu _{i,D}}{D}} }}{{\sqrt {\sum\limits_{i \ne 1} {\sigma _{i,B}^2{K}}  + \sum\limits_{\forall i} {\sigma _{i,D}^2{D}} } }}} \right],
	\end{equation}
	where the $Q$-function is the tail probability of stantard normal distribution. $\mu_{i,B}$ and $\mu_{i,D}$ are denoted as the expectation of ${{{\left( {\frac{{{\bf{h}}_{1,k}^i({\bf{h}}_{i,j}^i)^H}}{{M\rho _{i,j}^i}}} \right)}^2} + \frac{1}{{{L_p}}}{{\left( {\frac{{\rho _{1,k}^i}}{{\rho _{i,j}^i}}} \right)}^2}}$ and ${u_{1,k}^{i,j}{{(u_{1,k}^{i,j})}^H}\frac{{{P_d}}}{{{P_b}}}}$, while $\sigma_{i,B}^2$ and $\sigma_{i,D}^2$ are their variances.
	
	The CDF of the $(1,k)$-th D2D receiver's downlink SIR can be approximated as
	\begin{eqnarray}\label{AsySIRd2d}
	\!\!\!&&\!\!\!\!\!\!\!\!\!\!\Pr \left[ {{\gamma _{1,k}} \le {\tau }} \right] \nonumber \\
	\!\!\!\approx&&\!\!\!\!\!\!\!\!\!\!\!\!  F\!{\left[\! {\tau \! \left(\! {\sum\limits_{i \ne 1}  \left( {{\mu _{i,B}}^\prime K\! +\! {\mu _{i,D}}D} \right) \!+\! {\mu _{1,D}}(D - 1)}\! \right)}\!\! \right]_{\lambda  = \rho _{1,k}^{1,k}}},
	\end{eqnarray}
where $F$-function represents the tail probability of exponential distribution with expectation ${\lambda  = \rho _{1,k}^{1,k}}$ and ${\mu _{i,B}}^\prime$ is the expectation of ${{{\left( {\frac{{ {\bf{v}}_{1,k}^i{\bf{h}}{{_{i,j}^i}^H}}}{{M\rho _{i,j}^i}}} \right)}^2}}\!\frac{{{P_b}}}{{{P_d}}}$.	
\end{lemma}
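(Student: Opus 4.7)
The plan is to reduce both claims to a tail-probability calculation on the interference term in the denominator of the asymptotic SIR expressions from Lemma 1, and then invoke a standard limit theorem (CLT for the first claim, LLN for the second) for sums with a growing number of summands.

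For the downlink CDF I would first rewrite the event $\{\gamma_{1,k}\le\tau\}$ as $\{1/\gamma_{1,k}\ge 1/\tau\}$ via (\ref{CellularSIR}). The inverse SIR is the sum of $(C-1)K$ cellular-interference contributions plus $CD$ D2D-interference contributions, whose per-summand means and variances are exactly the $\mu_{i,B},\mu_{i,D},\sigma_{i,B}^2,\sigma_{i,D}^2$ named in the lemma. Because $K$ and $D$ are assumed large, a central limit theorem applied to this sum---fully independent across $(i,j)$ for the D2D terms, and conditionally independent given $\mathbf{h}_{1,k}^i$ for the cellular terms (with asymptotic decorrelation as $M\to\infty$ guaranteed by the $1/M$ scaling in the quadratic-form entries)---shows that the total interference is approximately Gaussian with mean $\sum_{i\ne 1}\mu_{i,B}K+\sum_{\forall i}\mu_{i,D}D$ and variance $\sum_{i\ne 1}\sigma_{i,B}^2 K+\sum_{\forall i}\sigma_{i,D}^2 D$. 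Expressing the resulting upper tail through the $Q$-function produces the first claim directly.

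For the D2D CDF the situation differs because the numerator $g_{1,k}^{1,k}(g_{1,k}^{1,k})^H$ is the squared modulus of a single complex Gaussian, hence exponential with mean $\rho_{1,k}^{1,k}$, whereas the denominator in (\ref{D2DSIR}) contains $\mathcal{O}(CK+CD)$ independent summands. I would rewrite $\{\gamma_{1,k}^{D2D}\le\tau\}$ as $\{|g_{1,k}^{1,k}|^2\le\tau\cdot\mathrm{denom}\}$ and invoke the weak law of large numbers to replace the random denominator by its expectation $\sum_{i\ne 1}(\mu'_{i,B}K+\mu_{i,D}D)+\mu_{1,D}(D-1)$. The $i=1$ cellular term is absent because the ZF precoder nulls the BS-to-intended-user beam that leaks into the D2D receiver (visible in (\ref{D2DSIR})), and the self-D2D piece carries $D-1$ rather than $D$ since the $(1,k)$ transmitter does not interfere with itself. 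Evaluating the exponential distribution of the numerator at this now-deterministic threshold yields precisely (\ref{AsySIRd2d}).

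The main obstacles I expect are threefold. First, one must verify that the summands have finite second moments, since ratios such as $\rho_{1,k}^i/\rho_{i,j}^i$ can blow up when an interfering user lies arbitrarily close to its serving BS; an implicit exclusion zone or a bounded-pathloss convention is needed so that $\mu_{\cdot,\cdot}$ and $\sigma_{\cdot,\cdot}^2$ are well defined. Second, one must actually compute these first and second moments in closed form by integrating the pathloss law against the uniform position density over a hexagonal cell and taking fading expectations---mechanical but tedious, and presumably deferred to an appendix. The most delicate point is the D2D step: replacing a random denominator by its mean is cruder than a CLT, so one must argue that its relative fluctuation $\sigma/\mu$ vanishes as $K,D\to\infty$ and that the convolution against the exponential numerator distribution therefore differs from the displayed exponential CDF only by a vanishing correction.
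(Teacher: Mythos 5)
Your proposal matches the paper's own proof in both halves: the downlink CDF is obtained exactly as you describe, by applying a central limit theorem (the paper invokes Lindeberg--Feller, checking that each summand's variance is negligible relative to the total) to the sum of independent interference terms constituting $1/\gamma_{1,k}$, and the D2D CDF by freezing the aggregate interference at its expectation and evaluating the exponential law of $g_{1,k}^{1,k}(g_{1,k}^{1,k})^H$ at the resulting deterministic threshold (the paper justifies the freezing by the interference fluctuating much less than the signal, which is the same concentration argument you phrase as a law of large numbers). One correction to a side remark: the absence of the $i=1$ BS term in (\ref{AsySIRd2d}) is not because the zero-forcing precoder nulls the beam toward the D2D receiver---${\bf{P}}_1$ is matched to ${\bf{\hat H}}_1^1$, not to ${\bf{v}}_{1,k}^1$, so that interference does not vanish; the term is simply dropped without explanation between (\ref{D2DSIR}) and the lemma, an inconsistency present in the paper itself rather than a consequence of precoder design.
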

\begin{proof}
	Please refer to the Appendix B.
\end{proof}
\begin{Remark}
	Based on the uniform distribution of cellular users and D2D transmitters, $\mu_{i,B}$ and $\mu_{i,D}$ can be calculated as follows 	
	\begin{eqnarray}
	{\mu _{i,B}}\!\!\!\!\!\! &=&\!\!\!\!\!\! \mathop {\lim }\limits_{\zeta  \to \infty }\!\! \int_{\frac{{\sqrt 3 }}{2}R}^R {\frac{{12x{{\cos }^{ - 1}}\left( {\frac{{\sqrt 3 R}}{{2x}}} \right)}}{{ \left(\pi {{\zeta ^2}\!\! - \frac{{3\sqrt 3 }}{2}} \right){R^2}}}\!\!\!\left[\!\! {\frac{{\rho _{1,k}^i}}{{M\rho _{i,j}^i}} \!\!+\!\! \frac{1}{{{L_p}}}{{\left( {\frac{{\rho _{1,k}^i}}{{\rho _{i,j}^i}}} \right)\!\!}^2}} \right]\!\!dx}\nonumber \\
	&&\!\!\!\!\!\!+\!\! \mathop {\lim }\limits_{\zeta  \to \infty }\!\!\! \int_R^{\zeta R}\!\! {\frac{{2\pi x}}{{\left( \pi{{\zeta ^2}\! -\! \frac{{3\sqrt 3 }}{2}} \right){R^2}}}\!\!\left[\!\! {\frac{{\rho _{1,k}^i}}{{M\rho _{i,j}^i}}\!\! +\!\! \frac{1}{{{L_p}}}{{\left( {\frac{{\rho _{1,k}^i}}{{\rho _{i,j}^i}}} \right)\!\!}^2}} \right]\!\!dx}, \nonumber
	\end{eqnarray}
	\begin{eqnarray}
	\sigma _{i,B}^2 \!\!\!\!\!\!&=&\!\!\!\!\!\! \mathop {\lim }\limits_{\zeta \! \to \infty }\!\! \int_{\frac{\!{\sqrt 3 }}{2}R}^R\!\! {\frac{{12x{{\cos }\!^{ - 1}}\!\!\left(\!\! {\frac{{\sqrt 3 R}}{{2x}}}\! \right)}}{{\left(\! {\pi {\zeta ^2}\! -\! \frac{{3\sqrt 3 }}{2}}\! \right)\!\!{R^2}}}\!(\!\frac{{\rho _{1,k}^i}^2}{{M\!{\rho _{i,j}^i}}\!^2}\!\! + \!\!\frac{{\rho {{_{1,k}^i}^4}}}{{{L_p}\rho {{_{i,j}^i}\!^4}}}\!\! +\!\! \frac{{2\rho {{_{1,k}^i}^3}}}{{M\!{L_p}\rho {{_{i,j}^i}\!^3}}}\!)dx} \nonumber \\
	&&\!\!\!\!\!\!\!+\!\! \mathop {\lim }\limits_{\zeta \! \to \infty }\!\! \int_R^{\zeta R}\!\!\!\!\!  \frac{{2\pi x}}{{\left(\! {\pi {\zeta ^2}\!\! - \!\!\frac{{3\sqrt 3 }}{2}} \!\right)\!\!{R^2}}}(\!\frac{{\rho {{_{1,k}^i}^2}}}{{M\rho {{_{i,j}^i}\!^2}}}\!\! +\!\! \frac{{\rho {{_{1,k}^i}^4}}}{{{L_p}\rho {{_{i,j}^i}\!^4}}}\!\! +\!\! \frac{{2\rho {{_{1,k}^i}^3}}}{{M{L_p}\rho {{_{i,j}^i}\!^3}}}\!)dx. \nonumber
	\end{eqnarray}
${\mu _{i,D}}$ and $\sigma _{i,D}^2$ can be obtained Similarly. Note that the distance between D2D transmitters and receivers is much smaller than the cell radius, the distribution of D2D users can also be considered uniform, thus ${\mu _{i,B}^D}$, ${\mu _{i,D}^D}$ and ${\sigma {{_{i,B}^D}^2}}$, ${\sigma {{_{i,D}^D}^2}}$ can be calculated.
\end{Remark}

As a result, we have the following theorem on average goodput of downlink users and D2D receivers.

\begin{Theorem}[Average Goodput]
	With outage probability $\varepsilon $, the average goodput of the $(1,k)$-th downlink user is given as
	\begin{eqnarray}
	{\bar {\emph{g}}_{1,k}} = (1-\varepsilon) \times {L} \times {\log _2}\left( {1 + \frac{{1}}{{{Q^{ - 1}}\left( \varepsilon  \right)\sigma _1^{C} + \mu _1^{{C}}}}} \right), \label{goodputUser}
	\end{eqnarray}
	where 	$$\sigma _1^{C}\! =\!\!\! \sqrt {\sum\limits_{i \ne 1} {\sigma _{i,B}^2{K}} \! +\! \sum\limits_{\forall i} {\sigma _{i,D}^2{D}} }, \quad	\mu _1^{C} \!=\! \sum\limits_{i \ne 1} {{\mu _{i,B}}{K}} \! + \!\sum\limits_{\forall i} {{\mu _{i,D}}{D}}.$$
	Similarly, for the $(1,k)$-th D2D receiver the goodput data rate can be presented as
	\begin{equation}
	{\bar {\emph{g}}_{1,k}^{D2D}} = (1-\varepsilon) \times {L} \times {\log _2}\!\!\left(\! {1 + \frac{{{F^{ - 1}}{{( \varepsilon  )}_{{\lambda  = \rho _{1,k}^{1,k}}}}}}{{\mu _1^{D2D}}}}\! \right), \label{goodputD2D}
	\end{equation}
	where $\mu _1^{D2D} = \sum\limits_{i \ne 1} {\left( {{\mu _{i,B}}{K} + {\mu _{i,D}}{D}} \right)}  + {\mu _{1,D}}({D} - 1)$.
	Then the overall average goodput of the $1$-st cell is given as
	\begin{eqnarray}
{{\bar {\emph{g}}}_1} = K{\mathbb{E}_{\bf{U}}}\left[ {{{\bar {\emph{g}}}_{1,k}}} \right] + D{\mathbb{E}_{\bf{U}}}\left[ {\bar {\emph{g}}_{1,k}^{D2D}} \right] \label{Overall},
	\end{eqnarray}
	where ${\mathbb{E}_{\bf{U}}}\left[ {{{\bar {\emph{g}}}_{1,k}}} \right]$ and ${\mathbb{E}_{\bf{U}}}\left[ {\bar {\emph{g}}_{1,k}^{D2D}} \right]$ are expectations based on the distribution of locations for single downlink user's average goodput and single D2D receiver's average goodput.
\end{Theorem}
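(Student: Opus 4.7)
The plan is to recognize that, given Lemma 2, Theorem 1 is essentially an inversion exercise: I would invert each approximate CDF to solve for the SIR threshold $T$ at outage level $\varepsilon$, then plug directly into the goodput expression $(1-\varepsilon)L\log_2(1+T)$ derived in Section II.C, and finally sum contributions across the users and D2D receivers of the target cell.

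For the downlink user, I would set $\Pr(\gamma_{1,k}\le T_{1,k})=\varepsilon$ and use the Gaussian approximation from Lemma 2 to obtain $Q\!\left[(1/T_{1,k}-\mu_1^{C})/\sigma_1^{C}\right]=\varepsilon$. Since $Q$ is strictly monotone on $\mathbb{R}$, this inverts cleanly to $1/T_{1,k}=Q^{-1}(\varepsilon)\sigma_1^{C}+\mu_1^{C}$, and substitution into $\bar{\emph{g}}_{1,k}=(1-\varepsilon)L\log_2(1+T_{1,k})$ yields (\ref{goodputUser}). For the D2D receiver, the asymptotic SIR in (\ref{D2DSIR}) has an exponential signal term $g_{1,k}^{1,k}(g_{1,k}^{1,k})^H\sim\mathrm{Exp}(\rho_{1,k}^{1,k})$, while Lemma 2 shows its interference concentrates around the deterministic quantity $\mu_1^{D2D}$. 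Inverting $F\bigl[T_{1,k}^{D2D}\mu_1^{D2D}\bigr]_{\lambda=\rho_{1,k}^{1,k}}=\varepsilon$ gives $T_{1,k}^{D2D}=F^{-1}(\varepsilon)_{\lambda=\rho_{1,k}^{1,k}}/\mu_1^{D2D}$, which upon substitution produces (\ref{goodputD2D}).

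The cell-level expression (\ref{Overall}) then follows by noting that $\bar{\emph{g}}_{1,k}$ and $\bar{\emph{g}}_{1,k}^{D2D}$ are still random in the target user's location (through $\rho_{1,k}^{i}$ and $\rho_{1,k}^{1,k}$), while the interferer locations have already been averaged out in the bulk statistics $\mu_{i,B},\mu_{i,D},\sigma_{i,B}^2,\sigma_{i,D}^2$. Taking $\mathbb{E}_{\bf U}[\cdot]$ over the remaining target-user location and summing $K$ downlink contributions with $D$ D2D contributions gives (\ref{Overall}).

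The algebra itself is routine; the delicate points are elsewhere. First, one must verify that the approximate CDFs of Lemma 2 are actually strictly monotone on the range of $\tau$ of interest (otherwise $Q^{-1}$ and $F^{-1}$ may not be well-defined as inverses of the approximations). Second, the expectation $\mathbb{E}_{\bf U}$ is taken \emph{after} the nonlinear mapping $\tau\mapsto\log_2(1+\tau)$, so interchanging expectation and inversion is not automatic; I would argue that, because the pathloss $\rho_{1,k}^{1,k}$ is bounded away from zero and infinity on the admissible D2D distance range, the integrands are bounded and the expectation is well-posed. These two points, rather than the inversion itself, constitute the main obstacle to making the result rigorous.
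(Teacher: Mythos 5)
Your proposal is correct and follows essentially the same route as the paper, whose entire proof is the one-line remark that Theorem 1 is directly derived from Lemma 2 by inverting the approximate CDFs at level $\varepsilon$ and substituting the resulting thresholds into $(1-\varepsilon)L\log_2(1+T)$. Your additional remarks on monotonicity of the approximate CDFs and on the order of the location-expectation are reasonable caveats that the paper does not address, but they do not change the argument.
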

\begin{proof}
	Theorem 1 is directly derived from lemma 2.
\end{proof}
The interference brought by D2D communication degrades the general performance of both downlink and D2D receivers, as revealed in (\ref{CellularSIR}) and (\ref{D2DSIR}). However, both downlink and D2D receivers with different locations may have different performance loss. Moreover, as the number of D2D links increases, the average goodput of both downlink users and D2D receivers degrades. However, the D2D number increase can improve the cell overall average goodput as revealed in (\ref{Overall}). Thus the cell overall average goodput is determined by the trade-off between the D2D number increase and the downlink users' and D2D receivers' average goodput degradation. As it may be quite complicated to figure out the trade-off by numerical simulation, we show the analytical results in the next section.

\section{Simulation Results}
In the simulation, we consider a network with 19 hexagonal cells each with radius $R=300m$. The number of downlink users in each cell is 10, and the antennas number of BS is 250. The length of pilot sequence is ${L_p} = 31$ and the length of downlink symbols within one frame is ${L}= 50$. The pathloss exponents are $\sigma=3.76$ and $\kappa=4.37$. The transmitting power of each BS and each D2D transmitter is 46 dBm and 23 dBm.

 In Fig. \ref{distance}, there are 10 D2D transmitters in each cell and the transmission distance of each D2D link is $10m$. For both downlink users and D2D receivers, we give the numerical results as well as the asymptotic results. It can be observed that the average goodput of downlink users decreases with respect to its distance to the service BS. This is because of stronger inter-cell interference. On the other hand, the average goodput of D2D links increases with respect to its distance to BS. This demonstrates the impact of interference from BS. Even there are a large number of antennas at the BS, its interference to D2D receiver is still strong and dominant.

 \begin{figure}
 	\centering
 	\includegraphics[height=4.2cm, width=6cm]{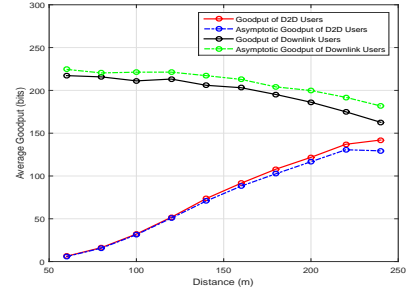}
 	\caption[frame]{Goodputs for downlink users and D2D receivers from different distances to the service BS, where $\varepsilon =0.1 $, D2D number = 10. } \label{distance}
 \end{figure}

 \begin{figure}
 	\centering
 	\includegraphics[height=4.2cm, width=6cm]{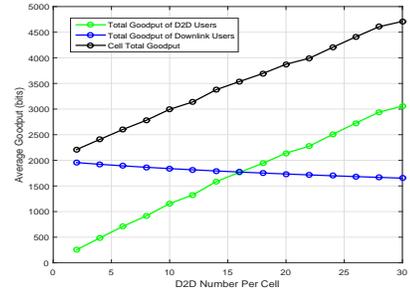}
 	\caption[frame]{Total Goodputs versus different D2D numbers per cell, $\varepsilon =0.1 $. } \label{D2Dnumber}
 \end{figure}

 Fig. \ref{distance} shows that the analytical results fit the actual performance quite well. Thus we can study the effect of D2D number with analytical expressions instead of complicated simulations. In Fig. \ref{D2Dnumber}, the effect of D2D number is demonstrated through analytical results, which show that the cell overall average goodput increases almost linearly with the D2D number, while the average goodput of all downlink users declines. Thus we reveal the trade-off between cell overall average goodput and downlink users' average goodput. Therefore, to achieve decent cell overall average goodput and fair average goodput for downlink users simultaneously, the number of D2D links per cell can be determined through our analytical results.
\section{Conclusion}

In this paper, we focus on the average goodput performance analysis of a D2D underlay downlink massive MIMO system. It is assumed that the cell coverage is hexagonal and the distribution of downlink users and D2D transmitters is independent and uniform. The asymptotic SIR expressions for both downlink transmission and D2D links are firstly derived. To take the potential packet outage, we continue to derive the approximated expressions of average goodput for both downlink and D2D links, which measures the number of information bits successfully delivered to the receiver. Based on it, the trade-off between the two types of links can be studied. Through simulation we show that the analytical results fit the numerical results quite well.

 \begin{appendices}
 	\section{Proof of Lemma 1}
 	In massive MIMO systems, when $M$ is sufficiently large, we usually utilize asymptotic orthogonality of channel as $$\frac{{{\bf{h}}_{j,k}^i{{\left( {{\bf{h}}_{j,k}^i} \right)}^H}}}{M} \! \to \! \rho _{j,k}^i, \! \quad \frac{{{\bf{h}}_{j,k}^i{{\left( {{\bf{h}}_{m,n}^l} \right)}^H}}}{M} \! \to \! 0,(i,j,k)\! \ne \!(l,m,n)$$	
 	It is also the same for ${\bf{v}}_{j,k}^i$. Thus the interference term caused by neighboring BSs in Equation (\ref{MatrixCellular}) can be simplified as
 	\begin{eqnarray}
 	&&\!\!\!\!\!\!\sum\limits_{\forall l \ne i} {{\bf{H}}_i^l} {{\bf{P}}_l}{\bf{W}}_l^d \approx \sum\limits_{\forall l \ne i} { {\bf{H}}_i^l{{({\bf{\hat H}}_l^l)}^H}{{\bf{R}}_l}^{ - 1}{\bf{W}}_l^d}\nonumber \\
 	&\approx& \!\!\!\!\!\! \sum\limits_{l \ne i}\! {{\bf{H}}_i^l{{({\bf{H}}_i^l)}\!^H}\!(\frac{{{\bf{X}}_i^p{{( {{\bf{X}}_l^p} )}\!^H}}}{{{P_u}{L_p}}}\!){{\bf{R}}_l}\!^{ -\! 1}\!{\bf{W}}_l^d} \!+\! \!\sum\limits_{l \ne i}\! {\bf{H}}_i^l{({\bf{H}}_l^l)\!^H}{{\bf{R}}_l}\!^{ -\! 1}\!{\bf{W}}_i^d.\nonumber
 	\end{eqnarray}
 	The first term remains because ${\bf{H}}_i^l{({\bf{H}}_i^l)^H} \gg {\bf{H}}_i^l{({\bf{H}}_m^l)^H},m \ne i$. The second term remains because ${\bf{I}} \gg ({\bf{X}}_l^p{\left( {{\bf{X}}_i^p} \right)\!^H}/{P_u}{L_p})$.
 	Similarly, for D2D receivers in the $i$-th cell, the interference from neighboring BSs can be simplified as
 	\begin{eqnarray}
 	&&\sum\limits_{\forall l \ne i} {{\bf{V}}_i^l} {{\bf{P}}_l}{\bf{W}} \approx \sum\limits_{l \ne i} {\bf{V}}_i^l{({\bf{H}}_l^l)^H}{{\bf{R}}_l}^{ - 1}{\bf{W}}_i^d.\nonumber
 	\end{eqnarray}
 	Therefore, the received signal of the $(1,k)$-th downlink user can be presented as
 	\begin{eqnarray}
 	{\bf{y}}_{1,k} \!\!\!\!\!&=&\!\!\!\!\! \sqrt {{P_b}}{\bf{w}}_{1,k}^d \!\!+\!\!\! \sum\limits_{\forall i \ne 1,j}\!\!\!\! \sqrt {{P_b}}{\rho _{1,k}^i{\bf{x}}_{1,k}^p{{({\bf{x}}_{i,j}^p)}^H}{\rho _{i,j}^i\!\!}^{ - 1}{\bf{w}}_{i,j}^d/({P_u}{L_p})}\nonumber  \\
 	&&\!\! +\!\! \sum\limits_{\forall i \ne 1,j}\sqrt {{P_b}} {\frac{{{{({\bf{h}}_{1,k}^i)}^H}{\bf{h}}_{i,j}^i{\bf{w}}_{i,j}^d}}{M\rho {{_{i,j}^i}}}}  + \sum\limits_{\forall i,j}\sqrt {{P_d}}  {u_{1,k}^{i,j}{\bf{s}}_{i,j}^d}, \label{CellularRec} \nonumber
 	\end{eqnarray}
 	where the intra-cell interference for downlink users is neglected for it is much smaller than inter-cell interference. Similarly, the received signal of the $(1,k)$-th D2D receiver is
 	\begin{eqnarray}
 	{\bf{y}}_{1,k}^{D2D} \approx \sum\limits_{\forall i,j}\sqrt {{P_b}}  {\frac{{{{({\bf{v}}_{1,k}^i)}^H}{\bf{h}}_{i,j}^i{\bf{w}}_{i,j}^d}}{M\rho {{_{i,j}^i}}}}  + \sum\limits_{\forall i,j}\sqrt {{P_d}} {g_{1,k}^{i,j}{\bf{s}}_{i,j}^d}. \label{D2DRec} \nonumber
 	\end{eqnarray}

 	In this paper, we assume the transmit signals are Gaussian with zero mean and unit variance. Then asymptotic SIR expressions as (\ref{CellularSIR}) and (\ref{D2DSIR}) can be straightforward obtained.
 	
 	\section{Proof of lemma2}
 	We first consider the downlink users. The asymptotic SIR expression is presented in (\ref{CellularSIR}). In the dominator, $\left\{\! {\left. {{{\left(\! {\frac{{{\bf{h}}_{1,k}^i{{\bf{h}}_{i,j}^i}^H}}{{M\rho _{i,j}^i}}} \!\right)}^2} \!\!+\! \frac{1}{{{L_p}}}{{\left( {\frac{{\rho _{1,k}^i}}{{\rho _{i,j}^i}}} \right)}^2}} \right|\!\forall i\! \ne\! 1,j}\! \right\}$ are independent variables with expectation ${\mu _{i,B}}$ and variance ${\sigma _{i,B}^2}$. Meanwhile, $\left\{\! {\left. {u_{1,k}^{i,j}{{(u_{1,k}^{i,j})}\!^H}}\!\frac{{{P_d}}}{{{P_b}}} \right|\forall i,j} \!\right\}$ are also independent variables with expectation ${{\mu _{i,D}}}$ and variance ${\sigma _{i,D}^2}$.
 	When the number of interfering cellular users $(C-1)K$ and D2D transmitters $CD$ is sufficiently large, we have $\frac{{\sigma _{i,B}^2}}{{\sum\limits_{\forall i \ne 1,j} {\sigma _{i,B}^2}  + \sum\limits_{\forall i,j} {\sigma _{i,D}^2} }} \to 0,\forall i \ne 1,j$ and $\frac{{\sigma _{i,D}^2}}{{\sum\limits_{\forall i \ne 1,j} {\sigma _{i,B}^2}  + \sum\limits_{\forall i,j} {\sigma _{i,D}^2} }} \to 0,\forall i,j$, which satisfies requirements of the Lindeberg-Feller Central Limit Theorem \cite{Chow1978}. According to the theorem, $1/{\gamma _{1,k}}$ converges to a Gaussian variable when $(C-1)K+CD$ is sufficiently large. Thus, we apply Gaussian approximation on $1/{\gamma _{1,k}}$. Then the probability that the downlink SIR of the $(1,k)$-th cellular user is less than ${{\tau}}$ (coverage outage probability) can be written as
 	\begin{eqnarray}
 	\Pr \left[ {{\gamma _{1,k}} \le {\tau }} \right] 	\approx Q\left[ {\frac{{\frac{{{1}}}{{{\tau}}} - {\mu _{i,B}}{K} - {\mu _{i,D}}{D}}}{{\sqrt {\sigma _{i,B}^2{K} + \sigma _{i,D}^2{D}} }}} \right].\nonumber
 	\end{eqnarray}
 	
 	The probability that the SIR of the $(1,k)$-th D2D link is less than ${{\tau ^{D2D}}}$ can be presented as
 	\begin{eqnarray}
 	\Pr \! \left[ {{\gamma _{1,k}}\! \le\! {\tau ^{D2D}}} \right] \!=\! \Pr\! \left[ {g_{1,k}^{1,k}g{{_{1,k}^{1,k}}^H}\!\! - \!{\tau ^{D2D}}\!\left( {I_{1,k}^{BS} + I_{1,k}^{D2D}} \right) \!\le\! 0} \right]\!,\nonumber
 	\end{eqnarray}
 	where $I_{1,k}^{BS} \!=\!\!\! \sum\limits_{\forall i \ne 1,j}\! {{{\left(\! {\frac{{{\bf{v}}_{1,k}^i{\bf{h}}{{_{i,j}^i}^H}}}{{M\rho _{i,j}^i}}} \!\right)}^2}}\!\frac{{{P_b}}}{{{P_d}}}$, $I_{1,k}^{D2D} \!=\!\!\!\! \sum\limits_{\forall (i,j) \atop \ne (1,k)} { g_{1,k}^{i,j} g{{_{1,k}^{i,j}}^H}}$. Due to D2D receivers are  quite close to their associated transmitters, the variation of the interference ${I_{1,k}^{BS} + I_{1,k}^{D2D}}$ is relatively much smaller than the signal variation. Thus we can regard the interference as a constant. Since the downlink channel ${g_{1,k}^{1,k}}$ is a complex Gaussian variable, the signal power is an exponential variable, where the coverage outage probability can be simplified as
 	\begin{eqnarray}
 	\Pr \!\left[\! {{\gamma _{1,k}} \!\le\! {\tau ^{D2D}}} \right]\! =\! \Pr\! \left[ {g_{1,k}^{1,k}g{{_{1,k}^{1,k}}^H} \!-\! {\tau ^{D2D}}\mathbb{E}\!\left[ {I_{1,k}^{BS} \!+\! I_{1,k}^{D2D}} \right]\! \le\! 0} \right]\nonumber
 	\end{eqnarray}
    Thus (\ref{AsySIRd2d}) can be easily obtained. This finishes the proof.

 \end{appendices}



\bibliographystyle{ieeetr}%
\bibliography{D2d}

\end{document}